\newtheorem{theorem}{Theorem}[section]
\newtheorem{lemma}[theorem]{Lemma}
\theoremstyle{definition}
\newtheorem{definition}[theorem]{Definition}
\theoremstyle{remark}
\numberwithin{equation}{section}
\date{September 23, 2011}
\begin{document}

\title{Computing on Binary Strings}

\author{
Tian-Ming Bu\thanks{Shanghai Key Laboratory of Trustworthy Computing, East
China Normal University, Shanghai, P.R.China. Email:
\texttt{tmbu@sei.ecnu.edu.cn}\,.} \and Chen Yuan\thanks{School of Computer
Science, Fudan University, Shanghai, P.R.China. Email:
\texttt{ych04@hotmail.com}\,.}
\and Peng Zhang\thanks{Shanghai Key Laboratory of Trustworthy Computing, East
China Normal University, Shanghai, P.R.China. Email:
\texttt{arena.zp@gmail.com}\,.} }

\maketitle

\begin{abstract}
Many problems in Computer Science can be abstracted to the following question:
given a set of objects and rules respectively, which new objects can be
produced? In the paper, we consider a succinct version of the question: given a
set of binary strings and several operations like \emph{conjunction} and
\emph{disjunction}, which new binary strings can be generated? Although it is a
fundamental problem, to the best of our knowledge, the problem hasn't been
studied yet. In this paper, an $O(m^2 n)$ algorithm is presented to determine
whether a string $s$ is representable by a set $W$, where $n$ is the number of
strings in $W$ and each string has the same length $m$. However, looking for
the minimum subset from a set to represent a given string is shown to be
$\mathcal{NP}$-hard. Also, finding the smallest subset from a set to represent each string in the original set is $\mathcal{NP}$-hard. We establishes inapproximability results and approximation algorithms for them. In addition, we prove that counting the number of strings
representable is $\#\mathcal{P}$-complete. We then explore how the problems change when the operator \emph{negation} is available. For example, if the operator \emph{negation}
can be used, the number is some power of 2. This difference maybe help us
understand the problem more profoundly.
\end{abstract}

\textbf{classification}: {algorithm design}

\newpage

\section{Introduction}

The 24 Game is a popular card game in which the players randomly pick up 4
cards, then try to get 24 from the numbers on the cards through addition,
subtraction, multiplication, or division. The idea behind the game may be
abstracted to the following question: given a set of objects and rules
respectively, which new objects can be produced? In Computer Science or other
disciplines, many problems actually have the same idea. For example, in the
proof theory, a fundamental question is to decide whether a proposition can be
deduced from a given axiom system. The \emph{subset sum problem}
\cite{karp1972}, deciding whether a specific number is a sum of a subset of
the given integers, is another example of this kind, which has been proved
$\mathcal{NP}$-complete.

In this paper, we consider a succinct version of the question: given a set of
binary strings and several operations like \emph{conjunction} and
\emph{disjunction}, which new binary strings can be generated? The problem can
also be described in the language of set theory. Specifically, given an
universal set and some subsets, which new sets can be generated by
\emph{intersection} and \emph{union} operations? Clearly, the problem is
intrinsic enough to have many theoretical and practical applications.

\subsection{Our Contributions}

\textbf{Decision problem}: For the decision problem to determine whether a
string $s$ can be generated from a given set of strings $W$ by a formula with
operators \emph{disjunction} and \emph{conjunction}, an $O(m^2 n)$ algorithm is
present where $n$ is the number of each string in $W$ and $m$ is the length of
strings in $W$. If the operator \emph{negation} is allowed, the algorithm still
works.

\textbf{Optimization problem}: Whether the operator \emph{negation} is allowed
or not, we prove that looking for the minimum subset from a set to represent a
given string is $\mathcal{NP}$-hard by reducing the \emph{minimum set cover
problem} to it. Furthermore, an approximation algorithm is given through an
approximation preserving reduction to the \emph{minimum set cover problem}. Besides, we studied finding the smallest subset to have the same set of representable strings as the original set in section \ref{sec:decide_MSS}. This also showed to be $\mathcal{NP}$-hard and is asymptotically as hard to approximate as minimum set cover problem.

\textbf{Counting problem}: We prove that counting the number of strings
representable by $W$ through operators \emph{disjunction} and
\emph{conjunction} is $\#\mathcal{P}$-complete through reducing the problem of
counting the number of antichains to the problem of counting the number of
upper sets to this problem. In addition, if the operator \emph{negation} can be
used, based on the reduction, we show that the number equals $2^{|U|}$, where
$U$ is the set of equivalent classes derived from $W\cup \overline W$.

\subsection{Related Work}

To the best of our knowledge, the most related topic with this problem is
Boolean algebra. However, the \emph{negation} operator which is indispensable
in Boolean algebra, is not considered in this paper except for the last section.
So no results in Boolean algebra can be used. In the last section, one of the
studied problems is to count the number of strings representable by operators
\emph{disjunction}, \emph{conjunction} and \emph{negation} from some initial
set $W$. Because the set of generated strings constitutes a Boolean algebra,
due to the \emph{representation theorem} by M. H. Stone in \cite{Stone1936}
that every abstract Boolean algebra can be interpreted as a Boolean algebra of
all subsets of some specially chosen universal set, and vice verse, the number
of generated strings is in the form of power of 2. Our result (Theorem
\ref{thm:count_negation}) further improves the theorem because just from the
initial set we can give the \emph{exact} number of strings representable.

\bigskip

The paper is organized as follows. Section 2 gives a more precise description
of the problem. Some necessary notations are also introduced. In Section 3, the
algorithm of deciding whether a string is representable is described in
details. In Section 4, we prove that counting the number of strings
representable is $\#\mathcal{P}$-complete. Section 5 studies the optimization
problem of looking for the minimum representation subset. The last section
discusses the problems where the operator \emph{negation} is allowed.

\section{Notations}
Given two binary strings with the same length, namely $s_1, s_2$, let $s_1 \wedge
s_2$ (resp. $s_1 \vee s_2$) be the binary string produced by bitwise AND $\wedge$
(resp. OR $\vee$) of $s_1$ and $s_2$. Given a set of $m$ bits long binary strings,
namely, $W=\{s_1,s_2, \cdots, s_n\}$, $s_i \in \{0,1\}^m$, if there is a
formula $\phi$ which calculates $s$, with operators in $\{ \wedge, \vee \}$ and
operands in some subset of $W$, then we say the target string $s$ is
representable by $W$ via formula $\phi$, or simply $s$ is representable. The
\emph{binary-string representability problem} (BSR), is to decide
whether a binary string $s$ is representable by a given string set $W$ or not.

Let $x$ denote any binary string, $b_{i}^{x}$ denote the $i^{\text{th}}$ bit of
$x$. So, $x=b_{1}^{x}b_{2}^{x} \cdots b_{m}^{x}$. Also, we define a function
$\mathsf{Zero}: \mathsf{Zero}(x)=\{i|b_{i}^{x}=0\}$, from a binary string to a
set of natural numbers which denotes the indices of bits with value 0 in the
binary string. Similarly, $\mathsf{One}(x)$ denotes the indices of 1 valued
bits of $x$. Also, $\mathbf{0}$ (resp. $\mathbf{1}$) denotes a binary string
with no 1 (resp. 0) valued bits. That is,
$\mathsf{One}(\mathbf{0})=\mathsf{Zero}(\mathbf{1})=\emptyset$.

If all strings in the set $W$ have the same value in some bit, obviously the
generated string must have the same value in the same bit whatever the
generation formula $\phi$ is. So, without loss of generality, it is justifiable to
assume
$\bigcap_{x \in W}\mathsf{One}(x)=\emptyset$ and $\bigcap_{x \in
W}\mathsf{Zero}(x)=\emptyset$ respectively.

In addition, $T_i$ denotes the set of binary strings in $W$ whose
$i^{\text{th}}$ bit value is 0, i.e., ${T_i} = \{x \in W|b^x_i=0\}$. Let ${t_i}
= \bigvee_{x \in {T_i}} x$.

\section{Binary-string Representability Problem}\label{sec:decide_s}

In this section, we will present an $O(m^2 n)$ algorithm to solve the
binary-string representability problem.

\begin{algorithm}
\caption{Given $(W,s)$, determine whether $s$ is representable}
\begin{algorithmic}[1]
\Function{\textsc{Binary-String-Representability}}{$W,s$}

\If{$s=\mathbf{1}$}

   \If{($\bigvee_{x \in W} {x} = \mathbf{1}$)}

   \State \textbf{return} \textbf{TRUE}

   \Else

   \State \textbf{return} \textbf{FALSE}

   \EndIf

\EndIf

\ForAll{$i \in \mathbf{Zero}(s)$}

\State compute $t_i$

\EndFor

\If{($\bigwedge_{i \in \mathsf{Zero}(s)} {t_i} = s$)}

\State \textbf{return} \textbf{TRUE}

\Else

\State \textbf{return} \textbf{FALSE}

\EndIf

\EndFunction
\end{algorithmic}
\end{algorithm}

Next we prove the correctness of the algorithm.

\begin{lemma}\label{lem:CNFformula}
If there is a formula $\phi$ for $s$, there is an equivalent CNF $\phi_{CNF}$
for $s$, and any operands in $\phi_{CNF}$ is also in $\phi$.
\end{lemma}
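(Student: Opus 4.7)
The plan is to proceed by structural induction on the formula $\phi$. The only non-trivial ingredient is the distributive law $a \vee (b \wedge c) = (a \vee b) \wedge (a \vee c)$, which lets us push disjunctions inside conjunctions without introducing any new operands; note that because $\phi$ contains no negations, no De Morgan step is needed, so the set of operands used can only shrink or stay the same, never gain new elements.

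First I would handle the base case: if $\phi$ consists of a single operand $x \in W$, then $\phi$ itself is already a CNF (a single clause containing the single literal $x$), and trivially every operand of $\phi_{CNF}$ appears in $\phi$.

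Next I would carry out the inductive step, assuming each proper subformula of $\phi$ already admits an equivalent CNF using only operands that appear in it. If $\phi = \phi_1 \wedge \phi_2$, I would take the CNFs $\phi_1^{CNF}$ and $\phi_2^{CNF}$ supplied by the induction hypothesis and simply concatenate their clause lists; the result is in CNF, and its operands lie in the union of the operand sets of $\phi_1, \phi_2$, hence in $\phi$. If $\phi = \phi_1 \vee \phi_2$, I would write
\[
\phi_1^{CNF} \;=\; \bigwedge_{i} C_i, \qquad \phi_2^{CNF} \;=\; \bigwedge_{j} D_j,
\]
where each $C_i$ and $D_j$ is a disjunction of operands from $\phi_1$ and $\phi_2$ respectively, and apply the distributive law repeatedly to obtain
\[
\phi_1^{CNF} \vee \phi_2^{CNF} \;=\; \bigwedge_{i,j} (C_i \vee D_j),
\]
which is in CNF and again uses only operands already present in $\phi$.

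Honestly I do not expect a serious obstacle here; the statement is essentially the standard fact that monotone Boolean formulas admit CNF normalization via distributivity alone. The only bookkeeping point requiring a little care is the operand-preservation clause, but this follows immediately from the fact that distributivity and associativity rewrite rules duplicate operands rather than introduce fresh ones, so tracking operand sets through the induction is routine.
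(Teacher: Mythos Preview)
Your proof is correct and follows essentially the same approach as the paper: structural induction on $\phi$, handling $\wedge$ by concatenating the two inductive CNFs and handling $\vee$ by distributing to obtain $\bigwedge_{i,j}(C_i \vee D_j)$. If anything, you are slightly more explicit than the paper in tracking that the operand set is preserved through the distributive rewriting.
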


\begin{proof}
We prove it by induction on the number of operators in $\phi$. If $\phi$ has no
operators, it is a CNF. Now suppose each $\phi$ with less than $n$ operands has an
equivalent CNF. Given a $\phi$ with $n$ operands, if the last operand is
$\wedge$, namely $\phi=\phi_1 \wedge \phi_2$ where both $\phi_1$ and $\phi_2$
have less than $n$ operands, then $\phi$ has an equivalent CNF since both
$\phi_1$ and $\phi_2$ have equivalent CNFs respectively. If the last operand is
$\vee$, namely $\phi=\phi_1 \vee \phi_2$, first we can write $\phi=\phi'_1 \vee
\phi'_2$ where $\phi'_1$ and $\phi'_2$ are equivalent CNFs of $\phi_1$ and
$\phi_2$ respectively. Then by applying distributivity iteratively, it is easy
to see $\phi_{CNF}=\bigwedge_{i,j}{(c_{1i} \vee c_{2j})}$ where $c_{1i}$ and
$c_{2j}$ are disjunctive clauses of $\phi'_1$ and $\phi'_2$ respectively.
\end{proof}

\begin{theorem}\label{lem:basic}
Given $(W,s)$ where $s \neq \mathbf{1}$, the following three propositions are
equivalent.
\begin{enumerate}
   \item $s$ is representable by $W$.

   \item $\forall i \in \mathsf{Zero}(s)$, $\mathsf{One}(s) \subseteq
       \mathsf{One}(t_i)$.

   \item $\bigwedge_{i \in \mathsf{Zero}(s)} {t_i} = s$.
\end{enumerate}
\end{theorem}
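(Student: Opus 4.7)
The plan is to prove the equivalence as a cycle $(3)\Rightarrow(1)\Rightarrow(2)\Rightarrow(3)$. The step $(3)\Rightarrow(1)$ is essentially free: each $t_i=\bigvee_{x\in T_i}x$ is a disjunction of operands drawn from $W$, so $\bigwedge_{i\in\mathsf{Zero}(s)}t_i$ is itself a legal formula over $W$ that, by hypothesis, evaluates to $s$.

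For $(2)\Rightarrow(3)$, I would argue bit-by-bit. Take any position $j$. If $j\in\mathsf{One}(s)$, then by (2) we have $j\in\mathsf{One}(t_i)$ for every $i\in\mathsf{Zero}(s)$, so the $j$-th bit of $\bigwedge_{i\in\mathsf{Zero}(s)}t_i$ is $1$, matching $s$. If $j\in\mathsf{Zero}(s)$, then $j$ itself appears in the conjunction as an index, and since every $x\in T_j$ has $b^x_j=0$, also $b^{t_j}_j=0$, so the $j$-th bit of the conjunction is $0$, again matching $s$. Hence $\bigwedge_{i\in\mathsf{Zero}(s)}t_i=s$.

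The main work is $(1)\Rightarrow(2)$, and this is where I would use Lemma \ref{lem:CNFformula}. If $s$ is representable, replace the given formula by an equivalent CNF $\phi_{CNF}=\bigwedge_{k}C_k$ whose operands are all in $W$; write $C_k=\bigvee_{x\in S_k}x$ for some $S_k\subseteq W$. Fix an arbitrary $i\in\mathsf{Zero}(s)$. Since the $i$-th bit of $\bigwedge_k C_k$ equals $0$, at least one clause $C_{k^\star}$ has $i$-th bit equal to $0$; but a disjunction has $0$ in position $i$ iff every operand has $0$ in position $i$, so $S_{k^\star}\subseteq T_i$. This forces $\mathsf{One}(C_{k^\star})\subseteq \mathsf{One}(t_i)$ bit-wise, because $t_i=\bigvee_{x\in T_i}x$ dominates any $\vee$ over a subset of $T_i$. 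Finally, since $s=\bigwedge_k C_k$ is bit-wise dominated by $C_{k^\star}$, we get $\mathsf{One}(s)\subseteq \mathsf{One}(C_{k^\star})\subseteq \mathsf{One}(t_i)$, which is (2).

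The only non-routine step is $(1)\Rightarrow(2)$, and the obstacle there is purely conceptual: an arbitrary formula $\phi$ does not obviously expose, for each zero-bit $i$ of $s$, a single disjunctive witness whose operands all lie in $T_i$. Lemma \ref{lem:CNFformula} removes precisely this obstacle by guaranteeing a CNF rewrite using the same operands, after which the witness clause $C_{k^\star}$ can be read off.
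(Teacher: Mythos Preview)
Your proof is correct and follows essentially the same route as the paper: the same cycle of implications, the same appeal to Lemma~\ref{lem:CNFformula} for $(1)\Rightarrow(2)$, and the same observation that $b^{t_j}_j=0$ supplies the zero-bits in $(2)\Rightarrow(3)$. Your bit-by-bit presentation of $(2)\Rightarrow(3)$ is phrased differently from the paper's set-complement argument, but the content is identical.
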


\begin{proof}
$(1)\Rightarrow (2)$: If $s$ is representable by $W$, according to the lemma \ref{lem:CNFformula}, there is an equivalent CNF formula $\phi_{CNF}$ for $s$. For any clause
$c$ of $\phi_{CNF}$, $\mathsf{One}(s) \subseteq \mathsf{One}(c)$, otherwise,
$x$ cannot be generated from those conjuncts. For any $i \in
\mathsf{Zero}(s)$, clearly there is at least one clause $c$ of $\phi_{CNF}$
with $ i \in \mathsf{Zero}(c)$. According to $t_i$'s definition, this clause
$c$ satisfies $\mathsf{One}(c) \subseteq \mathsf{One}(t_i)$. So,
$\mathsf{One}(s) \subseteq \mathsf{One}(c) \subseteq \mathsf{One}(t_i)$.

$(2)\Rightarrow (3)$: If $\forall i \in \mathsf{Zero}(s)$, $\mathsf{One}(s)
\subseteq \mathsf{One}(t_i)$, then $\mathsf{One}(s) \subseteq \bigcap_{i\in
\mathsf{Zero}(s)}\mathsf{One}(t_i)$. On the other hand,
$\overline{\mathsf{One}(s)}=\mathsf{Zero}(s) \subseteq \bigcup_{i\in
\mathsf{Zero}(s)} \overline{ \mathsf{One}(t_i) } = \overline{\bigcap_{i\in
\mathsf{Zero}(s)}\mathsf{One}(t_i)}$, because $\forall i \in \mathsf{Zero}(s)$,
$i \in \mathsf{Zero}(t_i)=\overline{\mathsf{One}(t_i)}$. So, $\mathsf{One}(s)
\supseteq \bigcap_{i\in \mathsf{Zero}(s)}\mathsf{One}(t_i)$. Therefor
$\mathsf{One}(s) = \bigcap_{i\in \mathsf{Zero}(s)}\mathsf{One}(t_i)$. So $s$
can be generated by the formula $\bigwedge_{i \in \mathsf{Zero}(s)} {t_i}$.

$(3)\Rightarrow (1)$: Since $\bigwedge_{i \in \mathsf{Zero}(s)} {t_i} = s$, $s$
is representable by $W$.
\end{proof}

Because $\mathbf{1}$ is representable if and only if $\bigvee_{s_i \in W} {s_i}
= \mathbf{1}$ holds, together with Theorem \ref{lem:basic}, the algorithm is
clearly correct. Furthermore, since lines 2--6 take $O(mn)$, lines 7--8 take
$O(m^2 n)$ and lines 9--12 take $O(m^2)$, the whole running time of the
algorithm is $O(m^2 n)$.

\section{Number of Representable Strings}\label{sec:countS}

In this section, we will discuss the following counting problem: given $W$, how many
binary strings can be generated from $W$? We use \#BSR to denote this number.
By reducing the problem of counting the number of antichains to the problem, we
prove the problem is $\#\mathcal{P}$-complete. Before giving the details, we
introduce some concepts first.

Given the set $\{1,\cdots,m\}$, we define an equivalence relation $\sim$ on it,
such that $i\sim j$ if and only if $T_i=T_j$. Let $U=\{[1],\cdots,[m]\}$ where
$[i]$ is the equivalence class of $i$, represent the partition. Note
that $U$ is still a well defined set under this notation even if there may be
several equivalence classes representing the same one. We use it to avoid more
unnecessary symbols. Based on the partition $U$, we define a binary relation
$\preceq_U$ such that $[i] \preceq_U [j]$ if and only if $T_i \subseteq T_j$.

\begin{lemma}\label{lem:poset}
$(U,\preceq_U)$ is a poset (partial ordered set).
\end{lemma}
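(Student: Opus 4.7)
The plan is to verify the three defining properties of a partial order (reflexivity, antisymmetry, transitivity) on the quotient set $U$, after first checking that $\preceq_U$ is well defined on equivalence classes rather than on representatives. Since the entire definition of $\preceq_U$ is driven by the subset relation on the sets $T_i$, everything should reduce to standard facts about $\subseteq$.

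First I would address well-definedness: suppose $i \sim i'$ and $j \sim j'$, so $T_i = T_{i'}$ and $T_j = T_{j'}$ by the definition of $\sim$. Then $T_i \subseteq T_j$ holds if and only if $T_{i'} \subseteq T_{j'}$, which means the truth value of $[i] \preceq_U [j]$ does not depend on the chosen representatives. This step is essentially automatic but must be stated because equivalence classes are compared.

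Next I would establish the three poset axioms by transporting them from $(\mathcal{P}(W), \subseteq)$. Reflexivity follows from $T_i \subseteq T_i$, hence $[i] \preceq_U [i]$. For antisymmetry, if $[i] \preceq_U [j]$ and $[j] \preceq_U [i]$, then $T_i \subseteq T_j$ and $T_j \subseteq T_i$, so $T_i = T_j$; by the definition of $\sim$ this gives $i \sim j$ and therefore $[i] = [j]$. For transitivity, $T_i \subseteq T_j$ and $T_j \subseteq T_k$ imply $T_i \subseteq T_k$, so $[i] \preceq_U [j] \preceq_U [k]$ yields $[i] \preceq_U [k]$.

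I do not foresee a real obstacle here; the lemma is essentially the observation that any relation defined as ``$f(i) \subseteq f(j)$'' for a function $f$ into a powerset descends to a partial order on the quotient by the fiber relation of $f$. The only thing that requires a brief but explicit remark is well-definedness, because the notation $U=\{[1],\dots,[m]\}$ allows duplicate listings of the same class and one wants to be sure this does not create any ambiguity in the definition of $\preceq_U$.
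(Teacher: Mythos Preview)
Your proposal is correct and follows essentially the same route as the paper: verify reflexivity, antisymmetry, and transitivity by reducing each to the corresponding property of $\subseteq$ on the sets $T_i$. The only difference is that you explicitly address well-definedness on equivalence classes, which the paper omits; this is a harmless (and arguably welcome) addition rather than a different approach.
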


\begin{proof}
Since $T_i \subseteq T_i$, $[i]\preceq_U [i]$ (\emph{reflexivity}). If $([i]
\preceq_U [j]) \wedge ([j] \preceq_U [i])$, then $(T_i \subseteq T_j) \wedge
(T_j \subseteq T_i)$. Thus $T_i=T_j$ and $[i]=[j]$ (\emph{antisymmetry}). If
$([i] \preceq_U [j]) \wedge ([j] \preceq_U [k])$, then $(T_i \subseteq T_j)
\wedge (T_j \subseteq T_k)$. Thus $[i] \preceq_U [k]$ (\emph{transitivity}).
\end{proof}

\begin{definition}[upper set]
For a poset $(X,\preceq)$, a subset $A \subseteq X$ is an upper set if and only if
 $\forall_{a \in A}{\left(a \preceq b \rightarrow b \in A\right)}$.
\end{definition}

\begin{lemma}\label{lem:equivClass}
For any representable string $s$, $i \in \mathsf{Zero}(s)$ if and only if $[i]
\subseteq \mathsf{Zero}(s)$.
\end{lemma}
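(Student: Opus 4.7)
The statement has two directions, and one is essentially free. The reverse implication, $[i] \subseteq \mathsf{Zero}(s) \Rightarrow i \in \mathsf{Zero}(s)$, is immediate because $i$ always lies in its own class $[i]$, so there is nothing to say there.

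The content is the forward direction: if $s$ is representable and $i \in \mathsf{Zero}(s)$, then every $j$ equivalent to $i$ also lies in $\mathsf{Zero}(s)$. My plan is to invoke the characterization just established in Theorem \ref{lem:basic}(3), which says $s = \bigwedge_{k \in \mathsf{Zero}(s)} t_k$. Since $i \in \mathsf{Zero}(s)$, the conjunct $t_i = \bigvee_{x \in T_i} x$ literally appears in the formula for $s$, so $\mathsf{Zero}(t_i) \subseteq \mathsf{Zero}(s)$. Hence it suffices to prove $[i] \subseteq \mathsf{Zero}(t_i)$.

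This last inclusion is a one-line computation from the definition of $\sim$. For any $j \in [i]$ we have $T_j = T_i$ by definition of the equivalence. In particular every $x \in T_i$ belongs to $T_j$ as well, which means $b^x_j = 0$ for every $x \in T_i$. Taking the disjunction, $b^{t_i}_j = \bigvee_{x \in T_i} b^x_j = 0$, so $j \in \mathsf{Zero}(t_i)$, and chaining with the previous paragraph gives $j \in \mathsf{Zero}(s)$.

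I do not anticipate a real obstacle here; the only thing to be careful about is not to confuse the two sides of the equivalence relation. The whole argument essentially exploits the symmetry $T_i = T_j$ for $j \in [i]$ twice: once to conclude that $t_i$ is zero on every coordinate of $[i]$, and once (implicitly) to know that $t_i$ is the conjunct we get to use. Given that Theorem \ref{lem:basic} does all the heavy lifting, the proof should take only a few lines.
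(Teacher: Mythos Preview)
Your proof is correct and essentially the same as the paper's: both directions are handled identically, and for the forward implication you and the paper both use Theorem~\ref{lem:basic} to get $\mathsf{Zero}(t_i)\subseteq\mathsf{Zero}(s)$ and then use $T_j=T_i$ to place every $j\in[i]$ into $\mathsf{Zero}(t_i)$. The only cosmetic difference is that the paper cites part~(2) of Theorem~\ref{lem:basic} while you cite part~(3), and the paper phrases $j\in\mathsf{Zero}(t_i)$ as $j\in\mathsf{Zero}(t_j)=\mathsf{Zero}(t_i)$ rather than unpacking the disjunction bitwise.
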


\begin{proof}
Clear, if $[i] \subseteq \mathsf{Zero}(s)$, then $i \in \mathsf{Zero}(s)$.
Now suppose $i \in \mathsf{Zero}(s)$. $\forall j\in [i]$, $T_j=T_i$ and
$t_j=t_i$. So $j \in \mathsf{Zero}(t_j)=\mathsf{Zero}(t_i)$. Since $s$ is
representable and $i \in \mathsf{Zero}(s)$, according to Theorem
\ref{lem:basic}, $\mathsf{One}(s) \subseteq \mathsf{One}(t_i)$. Equivalently
$\mathsf{Zero}(t_i) \subseteq \mathsf{Zero}(s)$. Thus $j \in \mathsf{Zero}(t_i)
\subseteq \mathsf{Zero}(s)$. So $j \in \mathsf{Zero}(s)$.
\end{proof}

\begin{lemma}\label{lem:1Stringto1Set}
Given $W$, the number of representable strings is the same as the number of
upper sets of $(U, \preceq_U)$.
\end{lemma}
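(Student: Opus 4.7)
The plan is to exhibit an explicit bijection between representable strings and upper sets of $(U,\preceq_U)$ by sending each representable $s$ to
\[
A_s := \{[i] \in U : i \in \mathsf{Zero}(s)\}.
\]
Lemma~\ref{lem:equivClass} guarantees that $\mathsf{Zero}(s)$ is a union of equivalence classes, so $A_s$ is well-defined, and the map is manifestly injective because $s$ is recovered from $A_s$ via $\mathsf{Zero}(s) = \bigcup_{[i] \in A_s} [i]$. What remains is to show $A_s$ is an upper set, and that every upper set $A$ arises as $A_s$ for some representable $s$.

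The step I would isolate first is a one-line identification that drives both verifications: for every $i$,
\[
\mathsf{Zero}(t_i) = \{j : T_i \subseteq T_j\},
\]
since $b^{t_i}_j = 0$ iff every $x \in T_i$ has its $j$th bit $0$. Translated through $\preceq_U$, this says $\mathsf{Zero}(t_i)$ (viewed as a subset of $\{1,\ldots,m\}$) is the principal upper set generated by $[i]$. Once this is in hand, upward-closedness of $A_s$ is immediate: if $[i] \in A_s$ and $[i] \preceq_U [j]$, then $j \in \mathsf{Zero}(t_i)$, and Theorem~\ref{lem:basic}(2) gives $\mathsf{Zero}(t_i) \subseteq \mathsf{Zero}(s)$, so $[j] \in A_s$. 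The degenerate case $s = \mathbf{1}$ corresponds to $A_s = \emptyset$, which is vacuously an upper set; representability of $\mathbf{1}$ follows from the standing assumption $\bigcap_{x \in W}\mathsf{Zero}(x) = \emptyset$, which forces $\bigvee_{x \in W} x = \mathbf{1}$.

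For surjectivity, given an upper set $A$, I would define $s$ by $\mathsf{Zero}(s) = \bigcup_{[i] \in A} [i]$ (so $s = \mathbf{1}$ when $A = \emptyset$, otherwise $s \neq \mathbf{1}$) and check representability via condition (3) of Theorem~\ref{lem:basic}, i.e. $\bigcup_{i \in \mathsf{Zero}(s)}\mathsf{Zero}(t_i) = \mathsf{Zero}(s)$. The inclusion $\supseteq$ is immediate since $i \in \mathsf{Zero}(t_i)$ for each $i \in \mathsf{Zero}(s)$. For $\subseteq$, any $j$ lying in some $\mathsf{Zero}(t_i)$ with $i \in \mathsf{Zero}(s)$ satisfies $T_i \subseteq T_j$, hence $[i] \preceq_U [j]$; since $[i] \in A$ and $A$ is upward-closed, $[j] \in A$, so $j \in \mathsf{Zero}(s)$. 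Clearly $A_s = A$ by construction, closing the bijection. I do not expect a serious obstacle; the whole argument is bookkeeping once $\mathsf{Zero}(t_i)$ is recognized as a principal upper set, and the only mild wrinkle is handling $s = \mathbf{1}$ separately because Theorem~\ref{lem:basic} excludes it.
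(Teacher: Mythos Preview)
Your proposal is correct and follows essentially the same approach as the paper: the same map $s \mapsto \{[i] : i \in \mathsf{Zero}(s)\}$, the same use of Lemma~\ref{lem:equivClass} for injectivity, the same inverse construction for surjectivity, and the same separate handling of $s = \mathbf{1}$. Your explicit identification of $\mathsf{Zero}(t_i)$ as the principal upper set at $[i]$ is a clean packaging of what the paper does implicitly (the paper instead routes through $\mathsf{Zero}(t_j) \subseteq \mathsf{Zero}(t_i)$ and checks condition~(2) rather than~(3) for surjectivity), but the underlying argument is the same.
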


\begin{proof}
We will construct a bijective function from the set of representable strings to
the set of upper sets. The bijective function is defined as follows:
$\mathsf{Zero}^*(s)=\{[i] | i \in \mathsf{Zero}(s)\}$. The domain of the
function is the set of representable strings. Next we will prove the codomain
of the function is the set of upper sets, and the function is bijective.

Clearly, $\mathbf{1}$ is representable since we assume $\bigcap_{x \in
W}\mathsf{Zero}(x)=\emptyset$. $\mathsf{Zero}^*(\mathbf{1})=\emptyset$ which is
surly an upper set of $(U, \preceq_U)$. In the following proof, we will assume
$s \neq \mathbf{1}$.

For each representable string $s$, if $[i] \in \mathsf{Zero}^*(s)$ and $[i]
\preceq_U [j]$, then $T_i \subseteq T_j$. So $\mathsf{One}(t_i) \subseteq
\mathsf{One}(t_j)$ and $\mathsf{Zero}(t_i) \supseteq \mathsf{Zero}(t_j)$.
According to Theorem \ref{lem:basic}, since $s$ is representable and $i \in
\mathsf{Zero}(s)$, $\mathsf{Zero}(t_i) \subseteq \mathsf{Zero}(s)$. So
$\mathsf{Zero}(s) \supseteq \mathsf{Zero}(t_i) \supseteq \mathsf{Zero}(t_j)$.
Since $j \in \mathsf{Zero}(t_j)$, $j \in \mathsf{Zero}(s)$ and $[j] \in
\mathsf{Zero}^*(s)$. This shows $\mathsf{Zero}^*(s)$ is an upper set for each
$s$ representable by $W$.

According to the definition of the function $\mathsf{Zero}^*(\cdot)$ and Lemma
\ref{lem:equivClass}, $\bigcup_{X \in \mathsf{Zero}^*(s)}X=\mathsf{Zero}(s)$.
So if $\mathsf{Zero}^*(s_1)=\mathsf{Zero}^*(s_2)$, then
$\mathsf{Zero}(s_1)=\mathsf{Zero}(s_2)$. Thus $s_1=s_2$. Therefore
$\mathsf{Zero}^*(s)$ is \emph{injective}.

If $A$ is an nonempty upper set, we construct a string $s$ with
$\mathsf{Zero}(s)=\bigcup_{X \in A}{X}$. Clearly $\mathsf{Zero}^*(s)=A$.
$\forall i \in \mathsf{Zero}(s)$, if $j \in \mathsf{Zero}(t_i)$, then $T_i
\subseteq T_j$ and $[i] \preceq_U [j]$. Since $A$ is an upper set and $i \in
\mathsf{Zero}(s)$, $[i] \in \mathsf{Zero}^*(s)=A$ and $[j] \in A$. Because
$\mathsf{Zero}(s)=\bigcup_{X \in A}{X}$, $j \in \mathsf{Zero}(s)$.
Consequently, if $j \in \mathsf{Zero}(t_i)$, $j \in \mathsf{Zero}(s)$.
Therefore, $\forall i \in \mathsf{Zero}(s)$, $\mathsf{Zero}(t_i) \subseteq
\mathsf{Zero}(s)$. According to Theorem \ref{lem:basic}, $s$ is representable.
So the function is \emph{surjective}.
\end{proof}

\begin{definition}[antichain]
For a poset $(X,\preceq)$, a subset $A \subseteq X$ is an antichain if and only
if $\forall_{a,b \in A} (a \neq b \rightarrow (a \npreceq b \wedge b \npreceq
a))$.
\end{definition}

\begin{lemma}\label{lem:1Setto1AChain}
Given any poset $(X,\preceq)$, the number of upper sets is the same as the
number of antichains.
\end{lemma}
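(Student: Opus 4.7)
The plan is to exhibit an explicit bijection between the set of upper sets of $(X,\preceq)$ and the set of antichains of $(X,\preceq)$. Since $X$ is finite in our setting (it arises from the finite set $W$), both directions of the bijection will be well-defined without any subtle transfinite argument.

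First I would define two maps. Given an upper set $U \subseteq X$, let $f(U) := \min(U)$, the set of $\preceq$-minimal elements of $U$; given an antichain $A \subseteq X$, let $g(A) := \{x \in X : \exists a \in A,\ a \preceq x\}$, the upper closure of $A$. I would also handle the empty set as a boundary case: $f(\emptyset) = \emptyset$ and $g(\emptyset) = \emptyset$.

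Next I would verify that each map lands in the claimed target. For $f$: if $a, b \in \min(U)$ with $a \preceq b$, then $a = b$ by minimality of $b$ inside $U$, so $\min(U)$ is an antichain. For $g$: if $x \in g(A)$ and $x \preceq y$, then some $a \in A$ satisfies $a \preceq x \preceq y$, so $y \in g(A)$; hence $g(A)$ is an upper set.

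Finally I would check that $f$ and $g$ are mutual inverses. For $g \circ f = \mathrm{id}$, the inclusion $g(\min(U)) \subseteq U$ is immediate from $U$ being upper, and the reverse inclusion uses finiteness: any $x \in U$ has a $\preceq$-minimal element $a$ of $U$ with $a \preceq x$, so $x \in g(\min(U))$. For $f \circ g = \mathrm{id}$, one shows $\min(g(A)) = A$: every $a \in A$ lies in $g(A)$ and is minimal there, because any $y \in g(A)$ with $y \preceq a$ comes with some $a' \in A$ satisfying $a' \preceq y \preceq a$, forcing $a' = a$ (antichain property) and hence $y = a$; conversely any minimal element of $g(A)$ must coincide with the witnessing $a \in A$ by the same argument.

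The main obstacle is the step $g(\min(U)) = U$, specifically showing that every element of an upper set has a minimal element of that upper set below it. In the finite-poset case this follows from a trivial descent argument, so the only real care needed is making that argument explicit and handling the empty upper set/antichain uniformly with the nonempty case.
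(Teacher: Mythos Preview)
Your proposal is correct and follows essentially the same approach as the paper: the paper also uses the map $\mathsf{Min}(\cdot)$ from upper sets to antichains and the upper-closure map in the other direction, proving injectivity and surjectivity directly rather than checking both composites equal the identity, but this is only a cosmetic reorganization of the same argument. The finiteness caveat you flag is exactly the point the paper uses implicitly when it asserts that an element of an upper set has a minimal predecessor in that set.
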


\begin{proof}
Let $\mathsf{Min}(A)=\{a \in A | \forall_{b \in A} (b \preceq a \rightarrow
b=a)\}$ denote the minimal elements of $A$, where $A$ is an upper set. It is
clear that $\mathsf{Min}(A)$ is an antichain. We will show that
$\mathsf{Min}(A)$ is a bijection from upper sets to antichains. For any two
different upper sets, $A_1$ and $A_2$, without loss of generality, suppose $a
\in A_2\setminus A_1$. Then there exists $a$'s predecessor $b$ such that $b \in
\mathsf{Min}(A_2)$. However, any predecessor of $a$ must not belong to $A_1$,
otherwise $a \in A_1$. So $b \notin \mathsf{Min}(A_1)$. Thus $\mathsf{Min}(A_1)
\neq \mathsf{Min}(A_2)$. So this function is \emph{injective}. If $C$ is an
antichain of $(X, \preceq)$, let $A=\{ a \in X| \exists_{b\in C} b \preceq
a\}$. Obviously $A$ is an upper set. Thus the function is \emph{surjective}.
Because $\mathsf{Min}(\cdot)$ is bijective, the number of upper sets is the
same as the number of antichains.
\end{proof}

\begin{theorem}\label{thm:CntStrings}
$\#BSR$ is $\#\mathcal{P}$-complete.
\end{theorem}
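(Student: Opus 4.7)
The plan splits into two halves: membership in $\#\mathcal{P}$ and $\#\mathcal{P}$-hardness. For membership I would exhibit a nondeterministic polynomial-time machine that, on input $W$, guesses an $m$-bit string $s$ and accepts iff $s$ is representable by $W$, using the $O(m^2 n)$ decision procedure of Section \ref{sec:decide_s} as its verifier; the number of accepting computations is then exactly $\#BSR(W)$.

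For hardness I plan to reduce from the $\#\mathcal{P}$-complete problem of counting antichains in a finite poset (Provan and Ball). Lemmas \ref{lem:1Stringto1Set} and \ref{lem:1Setto1AChain} already shoulder most of the work: together they identify $\#BSR(W)$ with the number of antichains of the induced poset $(U, \preceq_U)$. So it suffices to give a polynomial-time map that, starting from an arbitrary finite poset $(X, \preceq)$ with $X = \{1, \dots, m\}$, outputs a set $W$ whose associated $(U, \preceq_U)$ is order-isomorphic to $(X, \preceq)$. My intended construction is to introduce, for each $i \in X$, a string $w_i \in \{0,1\}^m$ with $b_j^{w_i} = 0$ iff $i \preceq j$. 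Then $T_j = \{w_i : i \preceq j\}$, so $T_i \subseteq T_j$ holds iff the down-set of $i$ is contained in the down-set of $j$, which by reflexivity and transitivity is equivalent to $i \preceq j$; antisymmetry then forces each equivalence class to be a singleton, so the map $i \mapsto [i]$ is the desired order-isomorphism, and chaining the two lemmas yields $\#BSR(W) = $ (number of antichains of $(X, \preceq)$).

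The main technical irritation I expect is reconciling this construction with the standing assumption that no bit position is constant across $W$. A maximum element of $(X, \preceq)$ would produce a position that is $0$ in every $w_i$, and dually for a minimum. The cleanest workaround is to observe that counting antichains remains $\#\mathcal{P}$-hard on posets without a unique maximum or minimum (one may adjoin or delete such extrema while tracking the controlled change in the antichain count), and to apply the reduction on this restricted class; alternatively, one appends a constant number of padding strings that break each offending bit without altering any containment among the $T_j$. Either route preserves both correctness and polynomial complexity of the reduction.
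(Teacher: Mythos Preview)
Your plan matches the paper's proof almost line for line: the same encoding $\mathsf{Zero}(w_i)=\{j:i\preceq j\}$, the same check that $T_i\subseteq T_j \Leftrightarrow i\preceq j$ so that $(U,\preceq_U)\cong(X,\preceq)$, and the same appeal to Lemmas~\ref{lem:1Stringto1Set} and~\ref{lem:1Setto1AChain}. You are in fact more careful than the paper on two fronts---explicitly arguing membership in $\#\mathcal{P}$ and flagging the tension with the standing no-constant-bit assumption, which the paper's proof silently ignores---though your ``dually for a minimum'' remark is off: since bit $j$ is always $0$ in $w_j$, no position can be constant~$1$, so only a top element of $(X,\preceq)$ needs handling.
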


\begin{proof}
Counting antichains ($\#AC$ for short) of a poset is shown to be
$\#\mathcal{P}$-complete in \cite{provan83}. To prove the theorem above, we
construct a parsimonious reduction from $\#AC$ to $\#BSR$. Given a poset
$(P,\preceq_{P})$ where $P=\{1,\cdots,m\}$, let $W=\{s_1,\cdots,s_m\}$ where
$\mathsf{Zero}(s_i)=\{j | i\preceq_{P} j\}$. Thus $T_i=\{s_k|i \in
\mathsf{Zero}(s_k)\}=\{s_k| k \preceq_P i\}$. If $i\preceq_{P} j$, then if $s_k
\in T_i$, $s_k \in T_j$ because of the transitivity of $k \preceq_P i$ and
$i\preceq_{P} j$. So $T_i \subseteq T_j$. On the other hand, if $T_i \subseteq
T_j$, since $s_i \in T_i$, $s_i \in T_j$. So $i \preceq_P j$. Therefore
$i\preceq_{P} j \equiv T_i \subseteq T_j \equiv [i] \preceq_U [j]$. This shows
$(U,\preceq_U)$ is isomorphic to $(P,\preceq_{P})$. So the number of antichains
on $(P,\preceq_{P})$ is the same as the number of antichains on
$(U,\preceq_U)$. (The readers maybe have observed that actually the set
$U=\{\{1\},\cdots,\{m\}\}$, namely $\forall i \neq j$, $T_i \neq T_j$.) Clearly
the set $W$ can be constructed in polynomial time. Therefore, according to
Lemma \ref{lem:1Stringto1Set} and \ref{lem:1Setto1AChain}, the number of
antichains of a poset $(P,\preceq_{P})$ is the same as the number strings
generated from the set $W$. So $\#BSR$ is $\#\mathcal{P}$-complete.
\end{proof}

\section{Minimum Representation Subset Problem}\label{sec:decide_S_K}

In Section \ref{sec:decide_s}, we study the problem to decide whether a string
is representable by $W$. In this part, we hope the string can be generated by
as few strings as possible, if it is representable by $W$. In other words,
given $(W,s)$, we try to find the \emph{minimum representation subset} from $W$
to represent the string $s$. We call it the \emph{minimum representation subset
problem}.

\begin{theorem}
The minimum representation subset problem is $\mathcal{NP}$-hard.
\end{theorem}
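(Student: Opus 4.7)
The plan is to give a polynomial-time Karp reduction from the decision version of the \textsc{Minimum Set Cover} problem, which is $\mathcal{NP}$-complete, to the decision version of the minimum representation subset problem. Given a set cover instance with universe $U = \{u_1,\ldots,u_m\}$, subsets $S_1,\ldots,S_n \subseteq U$, and budget $k$, we construct the following instance: for each $j$ let $x_j \in \{0,1\}^m$ be the string with $\mathsf{Zero}(x_j) = \{i : u_i \in S_j\}$; take $W = \{x_1,\ldots,x_n\}$, target string $s = \mathbf{0}$, and budget $k$.

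The key technical observation is that for every subset $W' \subseteq W$ and every formula $\phi$ over $W'$ built from $\wedge$ and $\vee$, the resulting string dominates $\bigwedge_{x \in W'} x$ bitwise. This follows by a straightforward structural induction on $\phi$: every operand in $W'$ already dominates the global AND, and both $\wedge$ and $\vee$ preserve a common bitwise lower bound. Consequently, since $\mathbf{0}$ is itself the bitwise minimum, the target $\mathbf{0}$ is representable by $W'$ if and only if $\bigwedge_{x \in W'} x = \mathbf{0}$. By the construction, this condition holds exactly when every coordinate $i$ is zeroed by some chosen string, i.e., when $\{S_j : x_j \in W'\}$ covers $U$.

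It follows that $(U,\{S_1,\ldots,S_n\})$ admits a cover of size at most $k$ if and only if $(W,\mathbf{0})$ admits a representation subset of size at most $k$, which completes the reduction. The main obstacle, such as it is, lies in choosing the right target string; once $s = \mathbf{0}$ is fixed, the semantic translation between bitwise AND on strings and union on the corresponding sets is immediate, and one does not need to invoke Theorem \ref{lem:basic} explicitly. A final bookkeeping step is to ensure the two WLOG assumptions of Section 2 hold: $\bigcap_{x \in W}\mathsf{One}(x) = \emptyset$ is automatic whenever the set cover instance is feasible (a polynomial-time check), and $\bigcap_{x \in W}\mathsf{Zero}(x) = \emptyset$ can be enforced by deleting from $U$ any element belonging to every $S_j$ before applying the reduction, since any such element is trivially covered by an arbitrary chosen set and does not affect the minimum cover size.
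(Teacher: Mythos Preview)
Your reduction is correct and is precisely the De~Morgan dual of the paper's: the paper takes $s=\mathbf{1}$ with $\mathsf{One}(s_i)=S_i$ and argues via disjunction, whereas you take $s=\mathbf{0}$ with $\mathsf{Zero}(x_j)=S_j$ and argue via conjunction. The monotonicity observation you spell out (any $\{\wedge,\vee\}$-formula over $W'$ bitwise dominates $\bigwedge_{x\in W'}x$) is exactly what justifies the paper's one-word ``obviously,'' and the paper itself notes the $\mathbf{0}/\mathbf{1}$ duality immediately after its proof; so the two arguments are essentially the same.
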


\begin{proof}
We reduce the \emph{minimum set cover problem} to it. Given an instance
$(\mathcal{U,S})$ where $\mathcal{U}=\{1,2,\dots,m\}$ is the universe, and
$\mathcal{S}=\{S_1, \dots, S_n\}$ is a family of subsets of $\mathcal{U}$, the
minimum set cover problem is to look for the minimum subfamily $\mathcal{C}
\subseteq \mathcal{S}$ whose union is $\mathcal{U}$. We construct an instance
of the minimum representation subset problem as follows. Keep $m$ unchanged,
let $s=\mathbf{1}$, $W=\{ s_i | \mathsf{One}(s_i) = S_i\}$. Obviously, there is
a subfamily of at most $k$ sets whose union is $\mathcal{U}$ if and only if
there is a set of at most $k$ strings whose disjunction is $\mathbf{1}$.
Therefore, the minimum of $(\mathcal{U,S})$ is the same as the minimum of
$(W,\mathbf{1})$.
\end{proof}

Clearly, the reduction in the proof can also be used directly to get a $(\ln
m)$-approximation algorithm for $(W,\mathbf{1})$. For $(W,\mathbf{0})$, we can
look for the minimum representation subset of $(\overline{W},\mathbf{1})$ where
$\overline{W}$ is the set of negation of strings in $W$. Because if $s$ is
representable by some set $C$, then $\overline{s}$ is representable by
$\overline{C}$ by applying DeMorgan's laws, and vice verse.

Given $(W,s)$ where $s \notin \{\mathbf{1},\mathbf{0}\}$, in the following, we
will show a $(2\ln\frac{m}{2})$-approximation algorithm via an approximation
preserving reduction to the \emph{minimum set cover problem}. Let the
universe be the Cartesian product of $\mathsf{Zero}(s)$ and $\mathsf{One}(s)$,
i.e., $\mathcal{U}=\mathsf{Zero}(s) \times \mathsf{One}(s)$. Since $s \notin
\{\mathbf{1},\mathbf{0}\}$, $\mathcal{U} \neq \emptyset$. For each $s_i \in W$,
we create a corresponding subset $S_i \in \mathcal{S}$ such that
$S_i=\mathsf{Zero}(s_i) \times \mathsf{One}(s_i)$. For a subfamily
$\mathcal{C}$, we use $C$ to denote the corresponding set of strings.

\begin{lemma}
$\mathcal{C}$ can cover $\mathcal{U}$ if and only if $C$ can generate $s$.
\end{lemma}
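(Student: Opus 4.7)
The plan is to apply Theorem~\ref{lem:basic} to the subset $C$ (read as the ``$W$'' of the theorem) and translate both directions of the equivalence into the same combinatorial statement about pairs in $\mathsf{Zero}(s) \times \mathsf{One}(s)$. Throughout, I write $t_i^C := \bigvee_{x \in T_i \cap C} x$ for the version of $t_i$ relative to $C$. Then Theorem~\ref{lem:basic} says $s$ is representable by $C$ if and only if $\mathsf{One}(s) \subseteq \mathsf{One}(t_i^C)$ for every $i \in \mathsf{Zero}(s)$; my goal is to recognise the set-cover condition as exactly this.

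Next I would unpack the cover condition bit by bit. Since $S_\ell = \mathsf{Zero}(s_\ell) \times \mathsf{One}(s_\ell)$, $\mathcal{C}$ covers $\mathcal{U}$ exactly when, for each $(i,j) \in \mathsf{Zero}(s) \times \mathsf{One}(s)$, there exists some $s_\ell \in C$ with $i \in \mathsf{Zero}(s_\ell)$ and $j \in \mathsf{One}(s_\ell)$. The first requirement is just $s_\ell \in T_i \cap C$, so taken together the existential is equivalent to $j \in \mathsf{One}(t_i^C)$. Quantifying over $j \in \mathsf{One}(s)$ turns this into $\mathsf{One}(s) \subseteq \mathsf{One}(t_i^C)$, and quantifying over $i \in \mathsf{Zero}(s)$ gives precisely condition~(2) of Theorem~\ref{lem:basic} applied to $(C,s)$. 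Both directions of the present lemma then follow immediately from the equivalence $(1)\Leftrightarrow(2)$ of that theorem, with no further work required.

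The only subtle point I anticipate is the legitimacy of applying Theorem~\ref{lem:basic} with $C$ in place of $W$, since the tacit WLOG assumption $\bigcap_{x \in W}\mathsf{Zero}(x) = \emptyset$ may fail for the smaller set $C$. This should be harmless: if $T_i \cap C = \emptyset$ for some $i \in \mathsf{Zero}(s)$, then every string in $C$ has a $1$ in position $i$, so any $\wedge/\vee$-combination of strings in $C$ also has a $1$ there, and $s$ cannot be representable by $C$; at the same time no $S_\ell$ with $s_\ell \in C$ contains any pair $(i,j)$ (since $i \notin \mathsf{Zero}(s_\ell)$), and because $s \neq \mathbf{0}$ gives $\mathsf{One}(s) \neq \emptyset$, such pairs actually exist in $\mathcal{U}$, so $\mathcal{C}$ fails to cover $\mathcal{U}$. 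Hence both sides are false in this degenerate case, and the equivalence holds vacuously; in all other cases the argument above goes through verbatim.
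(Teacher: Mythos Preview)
Your proposal is correct and follows essentially the same approach as the paper: both directions are obtained by applying Theorem~\ref{lem:basic} to the pair $(C,s)$ and identifying condition~(2) there with the covering condition, via the observation that $(i,j)\in S_\ell$ iff $s_\ell\in T_i$ and $j\in\mathsf{One}(s_\ell)$. Your treatment is in fact slightly more careful than the paper's, which applies Theorem~\ref{lem:basic} to $C$ without commenting on the degenerate case $T_i\cap C=\emptyset$ that you handle explicitly.
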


\begin{proof}
If $\mathcal{C}$ covers $\mathcal{U}$, $\forall i \in \mathsf{Zero}(s)$ we
define $\mathcal{C}_i=\{X\in \mathcal{C}|X\cap (\{i\} \times \mathsf{One}(s))
\neq \emptyset\}$. Let $C_i$ denote the corresponding subset of $\mathcal{C}_i$. By the
construction, we know that $C_i \subseteq \{x\in C|b^x_i=0\}$. Since
$\mathcal{C}$ covers $\mathcal{U}$, $\forall i \in \mathsf{Zero}(s)$, $\{i\}
\times \mathsf{One}(s) \subseteq \bigcup_{X \in \mathcal{C}_i}X$. Thus,
$\forall i \in \mathsf{Zero}(s)$, $\mathsf{One}(s) \subseteq
\mathsf{One}(\bigvee _{x \in C_i}x) \subseteq \mathsf{One}(\bigvee\{x\in
C|b^x_i=0\})$. According to Theorem \ref{lem:basic}, $s$ is representable by
$C$.

Conversely, if $s$ is representable by $C$, according to Theorem
\ref{lem:basic}, $\forall i \in \mathsf{Zero}(s)$, $\mathsf{One}(s) \subseteq
\mathsf{One}(\bigvee _{x \in C_i}x)$, where $C_i = \{x\in C|b^x_i=0\}$. Let
$\mathcal{C}_i$ be the corresponding subfamily. Then $\forall i \in
\mathsf{Zero}(s)$, $\{i\} \times \mathsf{One}(s) \subseteq \bigcup_{X \in
\mathcal{C}_i}X$. So $\mathcal{C}$ covers $\mathcal{U}$.
\end{proof}

It is easy to see that the maximum cardinality of a set in $\mathcal{S}$ is no
larger than $(\frac{m}{2})^2$. Consequently, by running the greedy algorithm
\cite{Johnson:1973:AAC:800125.804034} on $(\mathcal{U},\mathcal{S})$, we get a
$(2\ln\frac{m}{2})$-approximation algorithm for the minimum representation
subset problem.

\section{Minimum Spanning Subset Problem}\label{sec:decide_MSS}

In section \ref{sec:decide_S_K}, we study how to find the minimum subset which is enough to represent a given string $s$. A natural generalization is asking for the minimum subset to represent every string in $W$, i.e., find a minimum subset $A\subseteq W$, so that each $s\in W$ is representable by $A$. We refer to this problem as \emph{Minimum Spanning Subset (MSS)}. This definition implies that, for any string $s$, $s$ is representable by $A$ if and only if $s$ is representable by $W$. In a sense, $A$ has the same power of representation as $W$.
Let $U^A$ be the counterpart of $U$ defined on $A$, and $\preceq_{U^A}$ be the counterpart of $\preceq_U$.
According to Section \ref{lem:1Stringto1Set}, $A$ has the same power of representation as $W$ if and only if $\left((U^A), \preceq_{U^A} \right)$ is equivalent as $\left( (U,\preceq_U) \right)$. To make the problem more clear, we rephrase it as a more independent problem, \emph{Minimum Compare Set}.

\textbf{Minimum Compare Set (MCS)} Given a set of items, $A=\{a_1,a_2,\cdots,a_m\}$, and a collection $B$ of subsets of $A$, $B=\{b_1,b_2,\cdots,b_n\}$. A subset $X\subseteq A$ is called a \emph{compare set} for $(A,B)$ if and only if for any two sets in $B$, say $b_i, b_j$, $b_i \subseteq b_j$ if and only if $(b_i \cap X) \subseteq (b_j \cap X)$. The decision problem of a $k$ sized \emph{compare set} is denoted as $MCS(A,B,k)$.

Each string in MSS corresponds to an item in MCS and vice versa, and each equivalence class of $U$ in MSS corresponds to each subset in MCS and vice versa. So MCS is just a reformulation of MSS, they are actually the same problem. As we are going to reduce Minimum Set Cover (MSC) problem to MCS by a similar way introduced in \cite{moret:983}, we definite MSC again as follows.

\textbf{Minimum Set Cover (MSC)} Given a set of items, $\mathcal{U}=\{u_1,u_2,\cdots,u_m\}$, and a collection of subsets of $\mathcal{U}$, $\mathcal{F}=\{f_1,f_2,\cdots,f_n\}$. A subcollection $\mathcal{C}\subseteq \mathcal{F}$ is called a \emph{set cover} for $(\mathcal{U},\mathcal{F})$ if and only if $\bigcup_{f_i \in \mathcal{C}}{f_i} = \mathcal{U}$. The decision problem of a $k$ sized \emph{set cover} for $(\mathcal{U},\mathcal{F})$ is denoted as $MSC(\mathcal{U},\mathcal{F},k)$.

\begin{theorem}
Minimum Compare Set is NP-complete.
\end{theorem}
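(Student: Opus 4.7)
The proof naturally splits into two parts: showing that MCS lies in NP, and reducing Minimum Set Cover (MSC) to MCS in polynomial time. Membership in NP is routine: a candidate compare set $X\subseteq A$ of size at most $k$ serves as the certificate, and verification checks the biconditional $b_i\subseteq b_j \iff (b_i\cap X)\subseteq (b_j\cap X)$ for each ordered pair $(b_i,b_j)\in B\times B$ in $O(|B|^2|A|)$ time.

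For the hardness direction, the key observation I want to exploit is that the forward implication of the biconditional is automatic, so $X$ is a compare set if and only if, for every pair with $b_i\not\subseteq b_j$, the set $X$ hits the difference $b_i\setminus b_j$. This recasts MCS as a hitting-set problem over the family $\{b_i\setminus b_j : b_i\not\subseteq b_j\}$, and the plan is to design $(A,B)$ so that this family encodes an MSC instance cleanly. Given an MSC instance $(\mathcal{U},\mathcal{F},k)$ with $\mathcal{U}=\{u_1,\ldots,u_m\}$ and $\mathcal{F}=\{f_1,\ldots,f_n\}$ (WLOG $m\geq 2$ and every $u_j$ lies in some $f_i$), I would take $A=\{a_1,\ldots,a_n\}\cup\{p_1,\ldots,p_m\}$, with one item $a_i$ per set $f_i$ and a fresh padding item $p_j$ per element $u_j$, and $B=\{P_j,Q_j:1\le j\le m\}$ with $P_j=\{a_i:u_j\in f_i\}\cup\{p_j\}$ and $Q_j=\{p_j\}$. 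The target size would be $k+m$.

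For correctness in one direction, a set cover $\mathcal{C}$ of size $k$ yields the compare set $\{p_1,\ldots,p_m\}\cup\{a_i:f_i\in\mathcal{C}\}$ of size $k+m$. Conversely, each pair $(Q_j,Q_k)$ with $j\neq k$ has $Q_j\setminus Q_k=\{p_j\}$, so all $m$ padding items are forced into any compare set; then each pair $(P_j,Q_j)$ has $P_j\setminus Q_j=\{a_i:u_j\in f_i\}$, forcing $X\cap\{a_1,\ldots,a_n\}$ to cover every $u_j$. The main obstacle will be the bookkeeping for all remaining cross-pairs $(P_j,P_k)$, $(P_j,Q_k)$, $(Q_j,P_k)$ with $j\neq k$: one must verify that no such collateral pair secretly imposes a constraint beyond set cover. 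The padding construction is tailored so that each such difference contains $p_j$ (since $p_j$ appears in $P_j$ and $Q_j$ but in no $P_k$ or $Q_k$ for $k\neq j$), hence the forced padding already hits every collateral constraint, and the MCS optimum coincides with $m$ plus the MSC optimum.
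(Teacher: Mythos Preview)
Your proposal is correct and follows essentially the same reduction as the paper: your $P_j$ and $Q_j$ are precisely the paper's $b_j$ and $b_{m+j}$, your padding items $p_j$ are the paper's $a_1,\ldots,a_m$ (their \emph{LEFT} block), and your $a_i$ correspond to the paper's $a_{m+1},\ldots,a_{m+n}$ (their \emph{RIGHT} block). Your explicit hitting-set characterization of compare sets and the case check for the collateral pairs are somewhat more careful than the paper's sketch, and you also supply the NP-membership argument that the paper omits.
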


\begin{proof}
Given any instance of MSC, e.g., $MSC(\mathcal{U},\mathcal{F},k)$, we create an instance of MCS, $MCS(A,B,|\mathcal{U}|+k)$, where $|A|=|\mathcal{U}|+|\mathcal{F}|$ and $|B|=2|\mathcal{U}|$. For simplicity, let $|\mathcal{U}|=m$, $|\mathcal{F}|=n$. In MCS, $a_1,a_2,\cdots,a_m$ correspond to sets with a single element, $\{u_1\}, \{u_2\}, \cdots, \{u_m\}$, and $a_{m+1},\cdots, a_{m+n}$ correspond to sets $f_1,\cdots,f_n$. There are $2m$ subsets in MCS, the latter $m$ subsets are $\forall_{1\le i\le m}{b_{m+i}=\{a_i\} }$, and the forgoing $m$ ones are $\forall_{1\le i\le m}{b_i= a_{i} \cup \{a_j | u_i \in f_{j-m}\} } $. This completes the polynomial transformation and is illustrated in table \ref{tab:MSC(U,F,k)} and table \ref{tab:MCS(A,B,|U|+k)}. 1 in the table stands for containment, and blank stands for non-containment. Now we claim that $MCS(A,B,|\mathcal{U}|+k)$ is YES if and only if $MSC(\mathcal{U},\mathcal{F},k)$ is YES.
For simplicity, let $LEFT=\{a_{1}, \cdots, a_{m} \}$,  $RIGHT=\{ a_{m+1}, \cdots, a_{m+n} \}$, $UP=\{b_1,\cdots,b_m\}$, $LOW=\{b_{m+1}, \cdots, b_{2m} \}$. Now we make a key observation. $LEFT$ must be selected, otherwise any $b_i, b_j \in LOW$ cannot be compared. As long as $LEFT$ is selected, any two sets in $UP$ can be compared. Also, any two sets $b_i \in UP, b_j \in LOW$ can be compared if and only if $|i-j|\ne m$. The only pairs still needed to be compared are those $b_i, b_j$ where $|i-j|=m$. In order to compare $b_i$ with $b_{i+m}$, $1\le i \le m$, we have to make sure $b_i \cap RIGHT \ne \emptyset$. This is exactly selecting a subset of $RIGHT$ to hit each $b_i$, $1\le i \le m$. To do this, just select $k$ items in $\{a_{m+1}, \cdots, a_{m+n}\}$ which correspond to the set cover in $MSC(\mathcal{U},\mathcal{F},k)$.
\begin{table}[!ht]
  \centering
  \subfloat[$MSC(\mathcal{U},\mathcal{F},k)$ \label{tab:MSC(U,F,k)}]{
    \centering
\begin{tabular}{|c|c|c|c|}\hline
     & $f_1$ & $f_2$  & $f_3$ \\
    \hline
    $u_1$ & 1 & 1 & \\
    \hline
    $u_2$ &  & 1 & 1 \\
    \hline
    $u_3$ & 1 &  & 1 \\
    \hline
    $u_4$ & 1 &  &  \\
    \hline
    \end{tabular}
  }
  \qquad \qquad
  \subfloat[$MCS(A,B,|\mathcal{U}|+k)$\label{tab:MCS(A,B,|U|+k)}]{
    \centering
\begin{tabular}{|c|c|c|c|c|c|c|c|}\hline
     & $a_1$ & $a_2$  & $a_3$  & $a_4$ & $a_5$  & $a_6$  & $a_7$ \\
    \hline
    $b_1$ & 1 &  & & & 1 & 1 & \\
    \hline
    $b_2$ &  & 1 & & & & 1 & 1 \\
    \hline
    $b_3$ &  &  & 1 &  & 1 & & 1 \\
    \hline
    $b_4$ &  &  & & 1 &  1 & & \\
    \hline
    $b_5$ & 1 &  & & & & & \\
    \hline
    $b_6$ &  & 1 & & & & & \\
    \hline
    $b_7$ &  &  & 1 & & & & \\
    \hline
    $b_8$ &  &  & & 1 & & & \\
    \hline
    \end{tabular}
  }
\end{table}
\end{proof}

\subsection{Approximation and Inapproximability of MCS}
We first show that $MCS(A,B,k)$ is $O(\log{|B|})$ approximable as a reduction to Hitting Set Problem. Specifically, if $X$ is a compare set, then to compare any two subsets in $B$, say $b_i$ and $b_j$, $X \cap ( b_i\setminus b_j ) \ne \emptyset$ if $b_i\setminus b_j  \ne \emptyset$, and $X \cap ( b_j\setminus b_i ) \ne \emptyset$ if $b_j\setminus b_i  \ne \emptyset$. We will carry over the inapproximability of MSC to MCS by scaling the reduction used in the NP-Complete proof.


\begin{theorem}
The $MCS(A,B)$ has no polynomial-time algorithm with performance bound $o(\log{|B|})$ unless $\mathbf{P=NP}$. Also it has no polynomial-time algorithm with performance bound $(1-\epsilon)\ln{|B|}$, for any $\epsilon > 0$, unless $\mathbf{NP} \subset \mathbf{DTIME}(|B|^{\log{\log{|B|}}})$.
\end{theorem}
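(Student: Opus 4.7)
The plan is to lift the NP-completeness reduction from Minimum Set Cover (MSC) into a gap-preserving reduction, and then invoke the known hardness of MSC---the Raz-Safra $\Omega(\log n)$ bound for the first claim and Feige's $(1-\epsilon)\ln n$ bound for the second.

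First I would pin down the relationship between optima. By the same argument used in the NP-completeness proof, every compare set $X$ must contain all of $LEFT$ (to separate the singletons in $LOW$) and, beyond that, a subset of $RIGHT$ that hits each of the $m$ rows in $UP$, which is exactly a set cover of $(\mathcal{U},\mathcal{F})$. Therefore
\[
\mathrm{OPT}_{MCS}(A,B)=|LEFT|+\mathrm{OPT}_{MSC}(\mathcal{U},\mathcal{F})=m+k^{*},
\]
while $|B|=2m$, so $\ln|B|=\ln 2+\ln m$. A $c$-approximate compare set of size at most $c(m+k^{*})$ immediately yields, after discarding $LEFT$, a set cover of size at most $(c-1)m+ck^{*}$.

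The main obstacle is precisely the additive $m$ term: when $k^{*}\ll m$, an approximation factor $c$ for MCS fails to translate into a factor close to $c$ for MSC. To absorb this slack I would scale the reduction by applying it to MSC instances on which $k^{*}$ is large relative to $m$. The Raz-Safra and Feige constructions already provide instances in which $k^{*}$ is polynomially related to $m$ (typically $k^{*}=\Omega(m/\log m)$), and if not they can be padded by taking independent disjoint copies of the instance and duplicating elements so that the ratio $m/k^{*}$ is controlled. On such instances, $(c-1)m + c k^{*}\le c'\,k^{*}$ with $c'=c\,(1+o(1))$, so the approximation ratio transfers up to a factor arbitrarily close to $1$.

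Combining these pieces: if MCS admitted a polynomial-time $o(\log|B|)=o(\log m)$-approximation, running it on the MCS instance obtained from a Raz-Safra-hard MSC instance would yield an $o(\log m)$-approximation of MSC, forcing $\mathbf{P=NP}$; this gives the first claim. For the second, if MCS admitted a $(1-\epsilon)\ln|B|$-approximation for some fixed $\epsilon>0$, then after the same scaling the resulting MSC approximation would be of the form $(1-\epsilon')\ln m$ with $\epsilon'>0$, contradicting Feige's theorem unless $\mathbf{NP}\subset\mathbf{DTIME}(|B|^{\log\log|B|})$ (since $|B|=\Theta(m)$ the two exponent bases are interchangeable up to a $1+o(1)$ factor in the exponent). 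The real technical work is the explicit scaling that makes $(c-1)m$ negligible next to $(\ln m)\,k^{*}$ while preserving $|B|=\Theta(m)$; this is where I expect the bulk of the bookkeeping to lie.
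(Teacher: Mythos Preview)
You have correctly isolated the obstacle---the additive $m$ coming from the forced inclusion of $LEFT$---but the fix you propose does not remove it. Taking $x$ disjoint copies of the MSC instance multiplies both the universe size and the optimum by $x$; after applying the reduction unchanged you get $\mathrm{OPT}_{MCS}=xm+xk^{*}$ and $|B|=2xm$, so the ratio $m/k^{*}$ is exactly where it was. The fallback claim that the Raz--Safra or Feige instances already satisfy $k^{*}=\Omega(m/\log m)$ is also not enough even if true: plugging this into $(c-1)m+ck^{*}$ yields only an $O(c\log m)$-approximation for MSC, which for $c=o(\log m)$ is $o(\log^{2}m)$ and does not contradict the $\Omega(\log m)$ lower bound. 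To get $c'=c(1+o(1))$ you would need $m/k^{*}=o(1)$, and nothing in your scaling achieves that.

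The paper absorbs the slack by altering the reduction itself rather than the MSC instance alone. It forms $x=\lceil m\log m\rceil$ disjoint copies of $(\mathcal{U},\mathcal{F})$, but in the MCS instance all $x$ copies share a single $LEFT$ block of size $m$ and a single $LOW$ block of $m$ singletons; each copy gets its own slice of $RIGHT$ and its own block of $m$ rows in $UP$, so $|A|=m+xn$ and $|B|=(x+1)m$. A compare set must still contain all of $LEFT$, and in addition, for every one of the $x$ slices of $RIGHT$, the items chosen there must form a set cover of the corresponding copy. Hence $\mathrm{OPT}_{MCS}=m+x\delta$, the additive $m$ is now $O(x\delta/\log m)$, and a $\rho$-approximate compare set yields a cover of size at most $\rho\delta(1+O(1/\log m))$ for the original instance (take the smallest of the $x$ covers). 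Since $\log|B|=\Theta(\log m)$, the Raz--Safra and Feige bounds transfer with the stated constants. The idea you were missing is that the $LEFT$ overhead can be amortized over many copies only if those copies \emph{share} $LEFT$ inside the MCS construction.
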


\begin{proof}
 For any $MSC(\mathcal{U},\mathcal{F})$, we make an $x$ times multiplied MSC instance with $x$ disjoint copies of it. Let $|\mathcal{U}|=m$, $|\mathcal{F}|=n$, then $x=\lceil{m\log m}\rceil$.  We refer to it as \emph{Mul-MSC$(x\mathcal{U}, x\mathcal{F})$}. We then construct an \emph{MCS} instance $MCS(x\mathcal{F}+\mathcal{U},(x+1)\mathcal{U} )$ by the same way used in the NP-Complete proof. Notation is abused here to make the idea clear. Recall the construction that, $MSC(\mathcal{U},\mathcal{F})$ has a solution of size at most $\delta$ if and only if  \emph{Mul-MSC$(x\mathcal{U}, x\mathcal{F})$} has a solution of size at most $x\delta$, if and only if $MCS(x\mathcal{F}+\mathcal{U},(x+1)\mathcal{U} )$ has a solution of size at most $x\delta+m \le x\delta (1+O(1/\log m))$. Suppose we could approximate $MCS(x\mathcal{F}+\mathcal{U},(x+1)\mathcal{U} )$ within a factor of $\rho$, then $MSC(\mathcal{U},\mathcal{F})$ has a ratio of $\rho(1+O(1/\log m)$. From the definition of $MCS(A,B)$ and the transformation above, we see that $m=|\mathcal{U}|=\frac{1}{2} \log B$. For $MCS(\mathcal{U}, \mathcal{F})$, \cite{Raz:1997:SEL:258533.258641} shows it has no polynomial-time algorithm with performance bound $o(\log{|\mathcal{U}|})$ unless $\mathbf{P=NP}$, \cite{Feige:1998:TLN:285055.285059} shows it has no polynomial-time algorithm with performance bound $(1-\epsilon)\ln{|\mathcal{U}|}$, for any $\epsilon > 0$, unless $\mathbf{NP} \subset \mathbf{DTIME}(|\mathcal{U}|^{\log{\log{|\mathcal{U}|}}})$. Thus the theorem is correct.
\end{proof}


Because each string in MSS corresponds to an item in MCS and vice versa, and each equivalence class of $U$ in MSS corresponds to each subset in MCS and vice versa. So it is straight forward to show the inapproximability of Minimum Spanning Set (MSS).
\begin{theorem}
Given a set $W$ of $m$-bit long binary strings, for the Minimum Spanning Subset (MSS) problem, there is no polynomial-time algorithm with performance bound $o(\log{m})$ unless $\mathbf{P=NP}$. Also there is no polynomial-time algorithm with performance bound $(1-\epsilon)\ln{m}$, for any $\epsilon > 0$, unless $\mathbf{NP} \subset \mathbf{DTIME}(m^{\log{\log{m}}})$ .
\end{theorem}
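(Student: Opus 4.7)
The plan is to carry the MCS hardness just proved over to MSS via an approximation-preserving reduction whose only ``blow-up'' is that the bit-length $m$ of the target MSS instance equals the number of subsets $|B|$ of the source MCS instance, so that $\log m=\log|B|$ and the two logarithmic inapproximability bounds pass through intact.

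First I would make the item--string / subset--equivalence-class bijection noted in the paragraph immediately preceding the theorem into an explicit polynomial-time map. Given $MCS(A,B,k)$ with $|A|=n$ and $|B|=q$, assume without loss of generality that the $b_i$'s are pairwise distinct (duplicates affect neither the optimum nor the compare-set condition). Build $W=\{s_1,\dots,s_n\}$, where each $s_j$ is a binary string of length $m=q$ whose $i$-th bit equals $0$ iff $a_j\in b_i$. Then $T_i=\{s_j:a_j\in b_i\}$, so the map $b_i\mapsto T_i$ is a bijection from $B$ to $\{T_1,\dots,T_q\}$ and, since the $b_i$'s are distinct, every equivalence class of $U$ is a singleton, so $|U|=m$.

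Next I would verify that the reduction is exactly parsimonious on objective value. For $X\subseteq A$ let $A_X\subseteq W$ denote its image under the item-to-string bijection. Translating bit-by-string, the MCS compare-set condition
\begin{equation*}
b_i\subseteq b_j \;\Longleftrightarrow\; (b_i\cap X)\subseteq (b_j\cap X)\qquad\forall i,j
\end{equation*}
becomes
\begin{equation*}
T_i\subseteq T_j \;\Longleftrightarrow\; (T_i\cap A_X)\subseteq (T_j\cap A_X)\qquad\forall i,j,
\end{equation*}
which is precisely the statement that $(U^{A_X},\preceq_{U^{A_X}})$ agrees with $(U,\preceq_U)$ as a poset on bit positions. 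By Lemma~\ref{lem:1Stringto1Set}, agreeing posets force agreeing families of representable strings; in particular $A_X$ and $W$ represent exactly the same strings, so $A_X$ is a spanning subset of $W$. Conversely, a spanning subset of $W$ pulls back to a compare set of $(A,B)$ in the same way. Thus compare sets and spanning subsets are in cardinality-preserving bijection, and the two optima agree.

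Since $m=|B|$ exactly, any polynomial-time MSS algorithm with performance ratio $o(\log m)$, resp.\ $(1-\epsilon)\ln m$, would yield an MCS algorithm with ratio $o(\log|B|)$, resp.\ $(1-\epsilon)\ln|B|$, which by the previous theorem is impossible unless $\mathbf{P}=\mathbf{NP}$, resp.\ $\mathbf{NP}\subset\mathbf{DTIME}(|B|^{\log\log|B|})=\mathbf{DTIME}(m^{\log\log m})$. The one step I expect to need the most care is pinning down that the compare-set condition on $(A,B)$ and the ``same poset $(U,\preceq_U)$'' condition on $(W,A_X)$ really are the same relation after the reduction; once that identification is made the reduction is essentially a change of notation and the theorem drops out of the MCS theorem with no further loss in approximation factor.
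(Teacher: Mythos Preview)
Your proposal is correct and follows exactly the paper's approach: the paper simply observes that MSS and MCS are reformulations of one another under the dictionary ``string $\leftrightarrow$ item, equivalence class of $U$ $\leftrightarrow$ subset in $B$,'' and declares the inapproximability of MSS to be ``straight forward'' from the MCS theorem. You have spelled out that dictionary as an explicit polynomial-time reduction with $m=|B|$ and verified that compare sets and spanning subsets correspond bijectively with equal cardinality, which is precisely the content the paper leaves implicit.
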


\section{Power of Negation}

\emph{Negation} is also an elementary operation on boolean value, so it is
helpful to see how the properties of the studied problems in the former
sections changed when \emph{negation} is allowed.

\begin{theorem}\label{lem:negationEqual}
Given $(W,s)$, $s$ is representable by $W$ where \emph{negation} is allowed if
and only if $s$ is representable by $W\cup \overline W$ where \emph{negation}
is not allowed.
\end{theorem}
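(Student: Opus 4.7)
The plan is to prove this biconditional by handling the two directions separately, with the substantive direction $(\Rightarrow)$ amounting to a negation-normal-form transformation that pushes every negation down to the leaves by repeated application of De~Morgan's laws and double-negation elimination.

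For the $(\Leftarrow)$ direction, I would argue directly from the definitions: if $s$ is computed by some $\{\wedge,\vee\}$-formula $\phi$ whose operands all lie in $W\cup\overline{W}$, then replacing every operand of the form $\overline{x}$ (with $x\in W$) by the expression $\neg x$ produces a $\{\wedge,\vee,\neg\}$-formula over $W$ that evaluates to the same string $s$. Nothing else needs to be done here.

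For the $(\Rightarrow)$ direction, I would proceed by strong induction on the number of operators in $\phi$, but proving the stronger conjunctive claim that both $\phi$ and $\neg\phi$ admit equivalent $\{\wedge,\vee\}$-formulas over $W\cup\overline{W}$. In the base case $\phi=x\in W$, the formula $\phi$ itself lies in $W\cup\overline{W}$, and $\neg\phi=\overline{x}\in\overline{W}$ does too. If $\phi=\phi_1\wedge\phi_2$, then the induction hypotheses applied to $\phi_1$ and $\phi_2$ provide NNF-formulas for $\phi_1,\phi_2$ (whose conjunction computes $\phi$) and for $\neg\phi_1,\neg\phi_2$ (whose disjunction, by De~Morgan, computes $\neg\phi$); the $\phi=\phi_1\vee\phi_2$ case is symmetric. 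Finally, if $\phi=\neg\psi$, the induction hypothesis applied to $\psi$ already supplies an NNF-formula for $\neg\psi=\phi$ as well as for $\psi=\neg\phi$.

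The main subtlety to get right is the well-foundedness of the induction. A naive induction on subformula structure fails because the De Morgan rewrite $\neg(\phi_1\wedge\phi_2)\mapsto \neg\phi_1\vee\neg\phi_2$ does not move us to a proper subformula of $\neg(\phi_1\wedge\phi_2)$. Phrasing the inductive hypothesis to cover $\phi$ and $\neg\phi$ together, and inducting on the operator count of $\phi$, avoids this trap: every recursive appeal in the inductive step is to a strictly shorter formula ($\phi_1$, $\phi_2$, or $\psi$), and the De Morgan rewriting is executed once in the current step rather than being recursed upon. Once the transformation terminates, every leaf is either an $x\in W$ or an $\overline{x}\in\overline{W}$, giving the desired $\{\wedge,\vee\}$-formula over $W\cup\overline{W}$.
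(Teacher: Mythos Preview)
Your proposal is correct and follows essentially the same approach as the paper: both directions are handled as you describe, with the substantive $(\Rightarrow)$ direction reducing to a negation-normal-form transformation via De~Morgan's laws so that negations appear only at the leaves. The paper's proof is much terser---it simply invokes De~Morgan's laws without spelling out the induction---whereas you give a careful strengthened-hypothesis induction that makes termination explicit; this extra rigor is welcome but not a different idea.
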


\begin{proof}
First, if $s$ is representable by $W\cup \overline W$ where \emph{negation} is
not allowed, clearly $s$ is representable by $W$ where \emph{negation} is
allowed.

If $s$ is representable by $W$ where \emph{negation} is allowed, there is a
formula $\phi$ with operands in $W$ to represent $s$. Due to DeMorgan's laws,
there is an equivalent formula where each \emph{negation} operator only appears
immediately before some operand. Each pair of the \emph{negation} operator and
the operand immediately after it can be regarded as the operand in
$\overline{W}$. Therefore, $s$ is also representable by $W\cup \overline W$
where \emph{negation} is not allowed.
\end{proof}

According to the theorem, deciding whether $s$ is representable by $W$ when
\emph{negation} is allowed can also be solved by just inputting $(W
\cup\overline{W},s)$ to the algorithm in Section \ref{sec:decide_s}.

\begin{theorem}\label{thm:count_negation}
When \emph{negation} is allowed, the number of strings representable by $W$ is
$2^{|U|}$, where $U$ is the set of equivalence classes derived from $W\cup
\overline W$.
\end{theorem}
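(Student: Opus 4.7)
The plan is to combine Theorem \ref{lem:negationEqual} with Lemma \ref{lem:1Stringto1Set}, and then exploit the symmetry introduced by closing $W$ under complementation.

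First I would invoke Theorem \ref{lem:negationEqual} to reduce the problem to counting the strings representable from $W' := W \cup \overline{W}$ \emph{without} negation. Then by Lemma \ref{lem:1Stringto1Set} this count equals the number of upper sets of the poset $(U, \preceq_U)$ built from $W'$. So it suffices to prove that every subset of $U$ is an upper set, which is equivalent to showing that $(U, \preceq_U)$ is an antichain: if no two distinct equivalence classes are comparable, then every subset of $U$ is trivially upward closed, giving exactly $2^{|U|}$ upper sets.

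The heart of the argument is the antichain claim, and this is where the closure of $W$ under negation plays its role. Suppose $[i] \preceq_U [j]$, i.e., $T_i \subseteq T_j$ inside $W' = W \cup \overline{W}$. For any $x \in W$, the definition of $T_i$ gives two conditions: $x \in T_i \Leftrightarrow b_i^x = 0$, and $\overline{x} \in T_i \Leftrightarrow b_i^x = 1$. Applying $T_i \subseteq T_j$ to both $x$ and $\overline{x}$ yields $b_i^x = 0 \Rightarrow b_j^x = 0$ (from the first) and $b_i^x = 1 \Rightarrow b_j^x = 1$ (from the second). Together these force $b_i^x = b_j^x$ for every $x \in W$, hence $T_i = T_j$, i.e., $[i] = [j]$. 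Thus $\preceq_U$ collapses to the identity on $U$, and $(U, \preceq_U)$ is an antichain.

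The main obstacle, which I expect to be the only nontrivial step, is precisely this symmetry argument on $T_i$ versus $T_j$ when the string set is closed under complementation. Once it is established, everything else is bookkeeping: the number of upper sets of an $n$-element antichain is $2^n$, and this $n$ is by definition $|U|$.
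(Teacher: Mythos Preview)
Your proposal is correct and follows essentially the same route as the paper: reduce via Theorem~\ref{lem:negationEqual} to $W\cup\overline W$, pass to the poset $(U,\preceq_U)$, and use the closure under complementation to show that distinct classes are incomparable. The only cosmetic difference is that the paper invokes Lemma~\ref{lem:1Setto1AChain} as well and counts antichains rather than upper sets, and phrases the incomparability step as a contradiction (pick $x\in T_j\setminus T_i$ and observe $\overline{x}\in T_i\setminus T_j$) instead of your direct bitwise argument; both are the same idea.
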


\begin{proof}
According to Theorem \ref{lem:negationEqual}, the number of representable
strings when \emph{negation} is allowed, equals to the number of strings
generated from $W \cup \overline{W}$ when \emph{negation} is forbidden.
Further, by Lemma \ref{lem:1Stringto1Set} and \ref{lem:1Setto1AChain}, the
number of strings representable by $W$ when \emph{negation} is allowed, is the
same as the number of antichains of $(U, \preceq_U)$.

Thus let us look at the structure of $U$ derived from $W\cup \overline W$.
Suppose $[i]\ne [j]$, then $T_i \neq T_j$. If $T_i \subseteq T_j$, then there exists $x$ such that
$x \in T_j$ and $x \notin T_i$. So $j \in \mathsf{Zero}(x)$ and $i \notin
\mathsf{Zero}(x)$. Thus $j \notin \mathsf{Zero}(\overline{x})$ and $i \in
\mathsf{Zero}(\overline{x})$. Since $\overline{x}$ is also in $W\cup \overline
W$, therefore $\overline{x} \in T_i$ and $\overline{x} \notin T_j$ which
contradict the premise that $T_i \subseteq T_j$. So $T_i \nsubseteq T_j$.
Similarly, $T_j \nsubseteq T_i$. Thus $[i] \neq [j]$ implies $[i]$ and $[j]$
are incomparable in $(U,\preceq_U)$. Therefore, any subset of $U$ makes up an
antichain of $(U,\preceq_U)$. So when \emph{negation} is allowed, the number of
strings representable by $W$ is $2^{|U|}$.
\end{proof}


\begin{theorem}
When \emph{negation} is allowed, the minimum representation subset problem is
still $\mathcal{NP}$-hard.
\end{theorem}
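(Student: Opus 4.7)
The plan is to adapt the reduction from minimum set cover used in the non-negation version, with a small modification to neutralize negation. The main difficulty is that the earlier construction used target $s = \mathbf{1}$, but $\mathbf{1}$ is trivially representable with negation (a single $s_i \vee \overline{s_i}$ suffices), so the reduction must be redesigned. I would fix this by appending one extra ``witness'' bit whose value is $0$ in every string and in the target, so that any negated operand is automatically useless in the crucial portion of the CNF analysis.

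Concretely, given a set cover instance $(\mathcal{U}, \mathcal{S})$ with $\mathcal{U} = \{1, \dots, m\}$ and $\mathcal{S} = \{S_1, \dots, S_n\}$ (wlog $\bigcup_i S_i = \mathcal{U}$), I form binary strings of length $m+1$ by $\mathsf{One}(s_i) = S_i$, set $W = \{s_1, \dots, s_n\}$, and let the target be $s^*$ with $\mathsf{One}(s^*) = \{1, \dots, m\}$. Every $s_i$ and $s^*$ then have a $0$ in the last bit $m+1$.

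The key claim is that $C \subseteq W$ represents $s^*$ with negation if and only if $\{S_i : s_i \in C\}$ covers $\mathcal{U}$; since this correspondence is size-preserving and polynomial-time computable, NP-hardness will follow from the NP-hardness of minimum set cover. To prove the claim, I would invoke Theorem \ref{lem:negationEqual} to replace the question by ``$s^*$ representable by $C \cup \overline{C}$ without negation'', and then apply Theorem \ref{lem:basic}. Since $\mathsf{Zero}(s^*) = \{m+1\}$, the whole representability condition reduces to $\mathsf{One}(s^*) \subseteq \mathsf{One}\!\left(\bigvee \{x \in C \cup \overline{C} : b^{x}_{m+1} = 0\}\right)$. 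But every $s_i$ has bit $m+1$ equal to $0$ while every $\overline{s_i}$ has bit $m+1$ equal to $1$, so only the non-negated members of $C$ enter the disjunction; the condition collapses to $\{1, \dots, m\} \subseteq \bigcup_{s_i \in C} S_i$, which is exactly the set-cover condition.

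The chief subtlety is the role of the witness bit: it is precisely what disarms the negation trick, forcing the sole zero of $s^*$ to be ``explained'' only by original strings of $C$ and never by any $\overline{s_i}$. Once that point is established, the rest (polynomial-time construction, size-preservation of the bijection, and the classical NP-hardness of minimum set cover) is routine.
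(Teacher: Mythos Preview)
Your proof is correct and follows essentially the same strategy as the paper: reduce from minimum set cover by adding a single ``witness'' bit so that the target has exactly one zero, then combine Theorem~\ref{lem:negationEqual} with Theorem~\ref{lem:basic} to collapse representability to a single disjunction that encodes the cover condition. The only cosmetic difference is polarity---you set the witness bit to $0$ in every $s_i$ so that only unnegated strings survive in $t_{m+1}$, whereas the paper sets it to $1$ in every $s_i$ (encoding $S_i$ as $\mathsf{Zero}(s_i)$) so that only negated strings survive in $t_1$; the two constructions are dual and equally valid.
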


\begin{proof}
We reduce the \emph{minimum set cover problem} to it. Given an instance
$(\mathcal{U,S})$ where $\mathcal{U}=\{2,\dots,m\}$ and $\mathcal{S}=\{S_1,
\dots, S_n\}$. An instance of the minimum representation subset problem is
constructed as follows. Keep $m$ unchanged, let $W=\{s_i| \mathsf{Zero}(s_i)=
S_i\}$ (i.e., $\mathsf{One}(\overline{s_i})=S_i$) and $\mathsf{Zero}(s)=\{1\}$.
For a subfamily $\mathcal{C}$, $C$ denotes the corresponding set of strings,
and vice verse. If there is a subfamily $\mathcal{C}$ covers $\mathcal{U}$,
obviously $\bigvee_{x \in C}\overline{x} = s$. Conversely, if $s$ is
representable by a subset $C$ when \emph{negation} is allowed, then $s$ is
representable by $C \cup \overline C$ without \emph{negation} due to Theorem
\ref{lem:negationEqual}. Since $\forall x \in C$, $b^x_1 \neq 0$, according to
Theorem \ref{lem:basic}, $s=t_1=\bigvee_{x \in C}\overline{x}$. That is,
$\mathsf{One}(s)=\{2,3,\cdots,m\}=\bigcup_{x \in C}\mathsf{One}(\overline{x})$.
Thus $\bigcup_{X \in \mathcal{C}}{X} = \mathcal{U}$. So the minimum of
$(\mathcal{U,S})$ is the same as the minimum of $(W,s)$.
\end{proof}

\bibliographystyle{abbrv}
\bibliography{bstring}

\end{document}